\newtheorem{lemma}{Lemma}
\newtheorem{thm}{Theorem}
\newtheorem{cor}{Corollary}
\newtheorem{defn}{Definition}
\newtheorem{obs}{Observation}
\newcommand{\E}{\mathrm{E}}
\title{LP-Based Approximation Algorithms for Traveling Salesman Path Problems}
\author{ 
Hyung-Chan An
\thanks{
{\tt anhc@cs.cornell.edu}.
Dept. of Computer Science, Cornell University, Ithaca, NY 14853.
Research supported in part by NSF under grants no. CCF-1017688 and CCF-0729102, and the Korea Foundation for Advanced Studies.
Part of this research was conducted while the author was a visiting student at CSAIL, MIT.
}
\and
David B. Shmoys
\thanks{
{\tt shmoys@cs.cornell.edu}.
School of ORIE and Dept. of Computer Science, Cornell University, Ithaca, NY 14853.
Research supported in part by NSF under grants no. CCF-0832782 and CCF-1017688.
Part of this research was conducted while the author was a visiting professor at Sloan School of Management, MIT.
}
}
\date{}
\begin{document}

\maketitle 
\def\thepage {} 
\thispagestyle{empty}

\begin{abstract}
We present a $(\frac{5}{3}-\epsilon)$-approximation algorithm for some constant $\epsilon>0$ for the traveling salesman path problem under the unit-weight graphical metric, and prove an upper bound on the integrality gap of the path-variant Held-Karp relaxation both under this metric and the general metric. Given a complete graph with the metric cost and two designated endpoints in the graph, the traveling salesman path problem is to find a minimum Hamiltonian path between these two endpoints. The best previously known performance guarantee for this problem was $5/3$ and was due to Hoogeveen. We give the first constant upper bound on the integrality gap of the path-variant Held-Karp relaxation, showing it to be at most $5/3$ by providing a new analysis of Hoogeveen's algorithm. This analysis exhibits a well-characterized critical case, and we show that the recent result of Oveis Gharan, Saberi and Singh on the traveling salesman circuit problem under the unit-weight graphical metric can be modified for the path case to complement Hoogeveen's algorithm in the critical case, providing an improved performance guarantee of $(\frac{5}{3}-\epsilon)$. This also proves the matching integrality gap upper bound of $(\frac{5}{3}-\epsilon)$.
\end{abstract}
\newpage
\pagenumbering {arabic}
\normalsize

\section{Introduction}

The traveling salesman problem (TSP) has been widely studied in a variety of settings \cite{HK, C, L, FGM, PR, LLRS, KW, CP, AGMOS, AKS}, but we shall focus on the approximation results for the traveling salesman \emph{path} problem with two prespecified endpoints. In this variant, two vertices in the graph are specified as a part of the input, and we need to find a Hamiltonian path (in contrast to Hamiltonian circuit found in other variants) between these two vertices. We will assume a symmetric metric cost: costs are defined over the complete graph; the costs between two vertices in opposite directions do not differ; and the costs satisfy the triangle inequality. The triangle inequality can be assumed without loss of generality by allowing vertices to be visited more than once; furthermore, it is directly satisfied by many cost functions of interest. This assumption is necessary, in that without the triangle inequality the problem cannot be approximated within any polynomial approximation factor unless $\mathrm{P}=\mathrm{NP}$.

For the traveling salesman circuit problem, a $3/2$-approximation algorithm due to Christofides \cite{C} attains the best performance guarantee known; the matching upper bound on the integrality gap of the Held-Karp relaxation \cite{HK}, a standard LP relaxation of the problem, is obtained by an analysis of this algorithm \cite{W, SW}. In contrast, for the traveling salesman path problem with two prespecified endpoints, the best approximation ratio known is only $5/3$, as achieved by Hoogeveen's algorithm \cite{H}; no upper bound on the integrality gap of the (accordingly modified) Held-Karp relaxation is known. Recently, Oveis Gharan, Saberi and Singh \cite{OSS} studied a special case of the traveling salesman circuit problem where the cost function is a shortest-path metric defined by an unweighted undirected graph, and gave a $(\frac{3}{2}-\epsilon_0)$-approximation algorithm; its analysis constructively proves that the Held-Karp relaxation under this special case has an integrality gap of at most $\frac{3}{2}-\epsilon_0$. In this paper, we first present an LP-based new analysis of Hoogeveen's algorithm that establishes an upper bound of $5/3$ on the integrality gap of the path-variant Held-Karp relaxation under the general metric. We then modify Oveis Gharan, Saberi and Singh's algorithm in order to apply it to the path problem; our new analysis of Hoogeveen's algorithm reveals that the combination of these two algorithms yields a $(\frac{5}{3}-\epsilon)$-approximation algorithm for some constant $\epsilon >0$. This analysis would also constructively prove the matching integrality gap upper bound. We note that recent independent work due to M\"omke and Svensson \cite{MS} yields a better performance guarantee for the same special case of both the circuit- and path-variants, but shows the matching integrality gap upper bound only for the circuit-variant Held-Karp relaxation; it instead shows for the path-variant case that, for any $\epsilon'>0$, there exists $\tau(\epsilon')$ that depends on $\epsilon'$ such that every graph with more than $\tau(\epsilon')$ vertices has an integral optimal solution of value no more than $3-\sqrt{2}+\epsilon'$ times the Held-Karp optimum.

Proposed by Held and Karp \cite{HK} originally for the symmetric traveling salesman \emph{circuit} problem, the Held-Karp relaxation is a linear program relaxation of the traveling salesman polytope. It consists of the degree-2 constraints, and the edge connectivity constraints stating that the graph needs to be 2-edge-connected; the asymmetric Held-Karp relaxation is similarly formulated. These relaxations have proven to be useful in designing approximation algorithms in many settings (see, e.g., \cite{AGMOS, AKS, OSS, MS}). In the LP-based design of an approximation algorithm, one important measure of the strength of a particular LP relaxation is its integrality gap, i.e., the worst-case ratio between the integral and fractional optimal values. There is a significant gap between currently known lower and upper bounds on the integrality gap of the Held-Karp relaxation. For the asymmetric circuit case, the best lower bound known, due to Charikar, Goemans and Karloff \cite{CGK}, is $2$, whereas the upper bound, constructively proven by the best known approximation algorithm of Asadpour, Goemans, M{\k{a}}dry, Oveis Gharan and Saberi \cite{AGMOS}, is $O(\log n/\log\log n)$. For the symmetric circuit case, the best lower bound known is $4/3$, achieved by the family of graphs depicted in Figure~\mbox{\ref{f:1}(a)} under the shortest-path metric \cite{G}; yet, the best upper bound known, also constructively proven \cite{W, SW} based on Christofides' algorithm \cite{C}, is $3/2$. Christofides' algorithm does not directly compute the LP optimum, but the analyses due to Wolsey \cite{W} and Shmoys \& Williamson \cite{SW} establish that the output solution value can be bounded using the LP optimum. 

\begin{figure}
\label{f:1}
\center
\includegraphics[width=.75\textwidth]{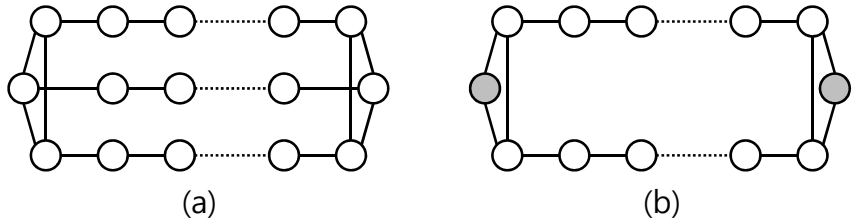}
\caption{Examples establishing the integrality gap lower bounds for the circuit- and path-variant Held-Karp relaxations}
\end{figure}

For the path problem, Hoogeveen \cite{H} presents Christofides-like approximation algorithms for three cases, depending on the number of prespecified endpoints.  When both endpoints are prespecified, Hoogeveen \cite{H} gives a $5/3$-approximation algorithm. However, the analysis compares the output solution value to the optimal (integral) solution, and therefore it is unclear whether the algorithm yields an integrality gap bound for the Held-Karp relaxation analogously formulated for the path problem. We observe that the family of graphs in Figure~\mbox{\ref{f:1}(b)} demonstrates the integrality gap lower bound of $3/2$. Note that this lower bound is strictly greater than the known upper bound on the integrality gap of the circuit-variant Held-Karp relaxation under the shortest-path metric defined by an unweighted undirected graph; this suggests that the lack of a performance guarantee known matching the $3/2$ for all other TSP variants has a true structural cause.

We present in this paper a new analysis of Hoogeveen's algorithm, bounding the output solution value in terms of the LP optimum. In particular, we produce two different bounds that both depend on the cost of the direct edge between the two endpoints but with opposite signs; taking the minimum of these two bounds will establish the integrality gap upper bound of $5/3$.

These two bounds coincide when the edge cost is exactly $1/3$ times the LP optimum; this in fact is the critical case that determines the integrality gap bound proven, and the analysis provides a better ratio on all other cases. Thus, the question whether this critical case can be improved, possibly by using a different algorithm only in the critical case or formulating a third bound that does not coincide with the other two at the critical point, could naturally follow. We will illustrate the former approach in a special case where the cost function is a shortest-path metric defined by an unweighted undirected graph: we will show how Oveis Gharan, Saberi and Singh's algorithm can be modified for the path case and prove that its performance guarantee \emph{near} the critical point is slightly better than Hoogeveen's algorithm. This analysis would also serve as a constructive integrality gap proof for the special case.

Section~\ref{s:pre} of this paper introduces some definitions and notation; Section~\ref{s:5over3} presents the new analysis yielding a constructive integrality gap proof for the general metric. Section~\ref{s:gm} gives the improved approximation algorithm for the special case of the shortest-path metric defined by an unweighted undirected graph, proving the matching integrality gap upper bound.

\section{Preliminaries}\label{s:pre}

In this section, we introduce some definitions and notation to be used throughout this paper.

Let $G=(V,E)$ be the input complete graph with cost function $c:E\to\mathbb{R}_+$. Let $s$ and $t$ be the two prespecificed endpoints; we call the other vertices \emph{internal points}. For $U\subset V$, $\delta(U)$ denotes the set of cut edges: i.e., $\delta(U)=\left\{\{u,v\}\in E : |\{u,v\}\cap U|=1\right\}$. Let $G(U)$ denote the subgraph induced by $U\subset V$.

For $F\subset E$ and $x\in\mathbb{R}^E$, $x(F)$ is a shorthand for $\sum_{f\in F}x_f$; $c(x)$ is for $\sum_{e\in E}x_e c(e)$.

\begin{defn}[Held and Karp \cite{HK}]
\label{d:hkcircuit}
The \textbf{circuit-variant Held-Karp relaxation} is the following:\begin{equation}\label{e:hkcircuit}
\begin{array}{lll}
\textrm{minimize}&c(x)&\\
\textrm{subject to}&x(\delta(S))\geq 2,&\forall S\subsetneq V, S\neq\emptyset;\\
&x(\delta(\{v\})) = 2,&\forall v\in V;\\
&x\geq 0.&
\end{array}
\end{equation}
\end{defn}

\begin{defn}
\label{d:hkpath}
The \textbf{path-variant Held-Karp relaxation} is analogously defined as follows:\begin{equation}\label{e:hkpath1}
\begin{array}{lll}
\textrm{minimize}&c(x)&\\
\textrm{subject to}&x(\delta(S))\geq 1,&\forall S\subsetneq V, |\{s,t\}\cap S| = 1;\\
&x(\delta(S))\geq 2,&\forall S\subsetneq V, |\{s,t\}\cap S| \neq 1, S\neq\emptyset ;\\
&x(\delta(\{s\})) = x(\delta(\{t\})) = 1;&\\
&x(\delta(\{v\})) = 2,&\forall v\in V\setminus\{s,t\};\\
&x\geq 0.&
\end{array}
\end{equation}
\end{defn}

Both linear programs can be solved in polynomial time via the ellipsoid method using a min-cut algorithm to solve the separation problem \cite{GLS}. The following observation gives an equivalent formulation of \eqref{e:hkpath1}.

\begin{obs}
\label{o:hkequiv}
Following is an equivalent formulation of \eqref{e:hkpath1}:\begin{equation}\label{e:hkpath2}
\begin{array}{lll}
\textrm{minimize}&c(x)&\\
\textrm{subject to}&x(E(S))\leq |S|-1,&\forall S\subsetneq V, \{s,t\} \not\subseteq S, S\neq\emptyset;\\
&x(E(S))\leq |S|-2,&\forall S\subsetneq V, \{s,t\} \subseteq S;\\
&x(\delta(\{s\})) = x(\delta(\{t\})) = 1;&\\
&x(\delta(\{v\})) = 2,&\forall v\in V\setminus\{s,t\};\\
&x\geq 0.&
\end{array}
\end{equation}
\end{obs}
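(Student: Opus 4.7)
The plan is to verify the equivalence by invoking the standard degree identity
\[
2\,x(E(S)) + x(\delta(S)) \;=\; \sum_{v\in S} x(\delta(\{v\})),
\]
which holds for every $S\subseteq V$ and every nonnegative vector $x\in\mathbb{R}^E$, and then performing a case analysis on $|\{s,t\}\cap S|$. Both \eqref{e:hkpath1} and \eqref{e:hkpath2} contain the same degree constraints ($x(\delta(\{s\}))=x(\delta(\{t\}))=1$ and $x(\delta(\{v\}))=2$ for internal $v$) as well as nonnegativity, so any feasible $x$ for either program satisfies $\sum_{v\in S} x(\delta(\{v\})) = 2|S| - |\{s,t\}\cap S|$. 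Substituting into the identity gives
\[
x(E(S)) \;=\; |S| - \tfrac{1}{2}|\{s,t\}\cap S| - \tfrac{1}{2}\,x(\delta(S)),
\]
so $x(\delta(S))$ and $x(E(S))$ are affinely related once the degree equalities are imposed.

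First I would dispatch the case $\{s,t\}\cap S=\emptyset$ and the case $|\{s,t\}\cap S|=1$ jointly under the heading $\{s,t\}\not\subseteq S$: in the former, the identity above yields $x(E(S)) = |S| - \tfrac{1}{2}x(\delta(S))$, so $x(\delta(S))\geq 2$ is equivalent to $x(E(S))\leq |S|-1$; in the latter, it yields $x(E(S)) = |S| - \tfrac{1}{2} - \tfrac{1}{2}x(\delta(S))$, so $x(\delta(S))\geq 1$ is again equivalent to $x(E(S))\leq |S|-1$. Then for $\{s,t\}\subseteq S$ the identity gives $x(E(S)) = |S| - 1 - \tfrac{1}{2}x(\delta(S))$, so $x(\delta(S))\geq 2$ is equivalent to $x(E(S))\leq |S|-2$.

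Together these three equivalences show, for any $x$ satisfying nonnegativity and the degree constraints, that the cut constraints of \eqref{e:hkpath1} hold if and only if the subtour-elimination-style constraints of \eqref{e:hkpath2} do, over exactly the same collection of sets $S\subsetneq V$ with $S\neq\emptyset$. Since the two programs share the same objective and the same degree and nonnegativity constraints, they have the same feasible region and hence are equivalent.

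There is no substantive obstacle: the entire argument is the standard flow-degree identity plus bookkeeping. The only point requiring mild care is to make sure the case $|\{s,t\}\cap S|=1$, which in \eqref{e:hkpath1} carries the weaker right-hand side $1$ rather than $2$, produces the same edge-set bound $|S|-1$ as the case $\{s,t\}\cap S=\emptyset$; this is exactly what allows the two cases to be merged into the single constraint $x(E(S))\leq|S|-1$ in \eqref{e:hkpath2}.
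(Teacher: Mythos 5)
Your proposal is correct. The paper states this as an observation and supplies no proof at all, so there is nothing to diverge from; your argument --- the degree identity $2x(E(S))+x(\delta(S))=\sum_{v\in S}x(\delta(\{v\}))$ combined with the imposed degree equalities, giving $x(E(S))=|S|-\tfrac{1}{2}|\{s,t\}\cap S|-\tfrac{1}{2}x(\delta(S))$ and a three-way case split on $|\{s,t\}\cap S|$ --- is exactly the standard derivation the authors leave implicit, and your bookkeeping (in particular, that the $|\{s,t\}\cap S|=0$ and $|\{s,t\}\cap S|=1$ cases both collapse to the bound $|S|-1$) checks out.
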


\begin{defn}
For $T\subset V$ and $J\subset E$, $J$ is a \textbf{$T$-join} if the set of odd-degree vertices in $G'=(V,J)$ is $T$.
\end{defn}

The following theorem gives a nice polyhedral representation of $T$-joins.

\begin{thm}[Edmonds and Johnson \cite{EJ}]
\label{t:tjoinpr}
Let $P_T(G)$ be the convex hull of the incidence vectors of the $T$-joins on $G=(V,E)$. $P_T(G)+\mathbb{R}_+^E$ can be exactly characterized by
\begin{equation}\label{e:tjoinpr}
\begin{cases}
y(\delta^+(S))\geq 1 ,&\forall S\subset V , |S\cap T|\textnormal{ odd};\\
y\in \mathbb{R}_+^E .&
\end{cases}
\end{equation}
\end{thm}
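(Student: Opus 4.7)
The plan is to establish the two set inclusions implicit in the polyhedral identity, writing $Q$ for the polyhedron defined by the inequalities (\ref{e:tjoinpr}); note $Q + \mathbb{R}_+^E = Q$ by construction.

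First I would dispatch the easy inclusion $P_T(G) + \mathbb{R}_+^E \subseteq Q$. For any $T$-join $J$ and any $S \subseteq V$ with $|S\cap T|$ odd, a handshake-lemma argument applied to the subgraph $(V,J)$ shows that $|J \cap \delta(S)|$ has the same parity as the number of odd-degree vertices of $(V,J)$ lying in $S$, namely $|S\cap T|$. Hence $|J\cap\delta(S)|$ is odd and in particular at least $1$, so the incidence vector $\chi_J$ satisfies every inequality in (\ref{e:tjoinpr}). Since the vertices of $P_T(G)$ are exactly the $\chi_J$, this gives $P_T(G)\subseteq Q$, and adding any nonnegative vector to a point of $Q$ only loosens the constraints, so $P_T(G)+\mathbb{R}_+^E\subseteq Q$.

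For the reverse inclusion $Q \subseteq P_T(G) + \mathbb{R}_+^E$, I would instead prove the stronger LP statement that, for every cost vector $c \in \mathbb{R}_+^E$,
\[
  \min\{\, c(y) : y \in Q \,\} \;=\; \min\{\, c(\chi_J) : J \text{ a } T\text{-join}\,\}.
\]
This suffices: if some $y^* \in Q$ lay strictly outside the closed convex set $P_T(G) + \mathbb{R}_+^E$, a separating hyperplane would furnish a nonnegative cost vector on which the two minima disagreed, a contradiction.

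The main obstacle is the LP equality itself, and my plan is to invoke the classical reduction of minimum $T$-joins to minimum-weight perfect matchings. On one side, a minimum-cost $T$-join decomposes into $|T|/2$ shortest paths in $(G,c)$ pairing up the vertices of $T$ along a perfect matching of minimum weight under the shortest-path metric $d$ induced by $c$; applying Edmonds' perfect matching polytope theorem on $T$ with the degree and odd-set constraints certifies this common value via the matching LP. On the other side, given any $y\in Q$, I would use an uncrossing argument on tight odd cuts to project $y$ into a feasible point of the matching LP on $T$ with no greater cost, so that $c(y)$ is bounded below by the matching LP optimum. Chaining these two bounds yields the desired equality and hence the theorem. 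The technically delicate step is the uncrossing: when two sets $S_1,S_2$ with $|S_i\cap T|$ odd cross, one must verify that $S_1\cap S_2$ and $S_1\cup S_2$ (or the symmetric-difference pair) inherit the odd-intersection property in the right configuration, so that the reduced fractional solution remains feasible and the cost accounting is tight.
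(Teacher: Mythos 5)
The paper does not actually prove this statement: it is imported as a classical black-box result of Edmonds and Johnson, with \cite{EJ} as the only justification, so there is no in-paper argument to compare yours against. Judged on its own, your sketch gets the easy half right and essentially complete: the parity count $\sum_{v\in S}\deg_J(v)=2|J\cap E(S)|+|J\cap\delta(S)|$ forces $|J\cap\delta(S)|$ to be odd whenever $|S\cap T|$ is, and convexity of $Q$ plus its up-monotonicity give $P_T(G)+\mathbb{R}_+^E\subseteq Q$. The reduction of the reverse inclusion to the LP-value identity is also sound, provided you add the one-line observation that a hyperplane separating a point from the up-monotone closed convex set $P_T(G)+\mathbb{R}_+^E$ can always be taken with a nonnegative normal (otherwise the linear functional would be unbounded below on that set), which is exactly what licenses restricting attention to $c\in\mathbb{R}_+^E$.

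The genuine gap is the lower bound $\min\{c(y):y\in Q\}\geq\min_J c(\chi_J)$, which is where the entire content of the Edmonds--Johnson theorem lives. Your plan to ``project $y$ into a feasible point of the matching LP on $T$'' by uncrossing tight odd cuts is not yet an argument: for an arbitrary $y\in Q$ it is not specified what fractional perfect matching on $K_T$ (with the shortest-path costs $d$) you assign to it, why that point satisfies Edmonds' odd-set constraints on subsets of $T$, or why its $d$-cost is at most $c(y)$. Uncrossing yields a laminar family of tight $T$-odd cuts, but by itself it produces neither the matching nor the cost accounting. The standard ways to close this are either LP duality against a fractional packing of $T$-cuts of total value equal to the minimum $T$-join cost (the Edmonds--Johnson/Seymour packing theorem, proved by induction on $|E|$), or a direct derivation of the $T$-join polyhedron from the perfect matching polytope of an auxiliary graph built by splitting the edges and vertices of $G$ itself, not of $K_T$. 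Either way, this step needs an explicit construction; as written, the proposal localizes the hard half of the theorem correctly but then assumes it rather than proving it.
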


Following are from Oveis Gharan, Saberi and Singh \cite{OSS}. In these, the parameters $\epsilon_1, \epsilon_2, \gamma, \delta$ and $\rho$ can be chosen as follows: $\epsilon_1 = 1.875\cdot 10^{-12}$, $\epsilon_2 = 5\cdot 10^{-2}$, $\gamma = 10^{-7}$, $\delta=6.25 \cdot 10^{-16}$, $\rho = 1.5\cdot 10^{-24}$, and $n$ denotes $|V|$.

\begin{defn}[Nearly integral edges]\label{d:niedges}
An edge $e$ is \textbf{nearly integral} with respect to $x\in\mathbb{R}^E$ if $x_e\geq 1-\gamma$.
\end{defn}

\begin{defn}\label{d:approxm}
For some constant $\nu\leq\frac{1}{5}$ and $k\geq 2$, a \textbf{maximum entropy distribution over spanning trees with approximate marginal} $x\in\mathbb{R}^E$ is a probability distribution $\mu$ defined by $\lambda\in\mathbb{R}^E$ such that $\mu(T)\propto\prod_{e\in T}\lambda_e$ for every spanning tree $T$ and the marginal probability of every edge $e$ is no greater than $(1+\frac{\nu}{n^k})x_e$.
\end{defn}

\begin{defn}[Good edges]\label{d:goodedges}
A cut is $(1+\delta)$-near-minimum if its weight is at most $(1+\delta)$ times the minimum cut weight. An edge $e$ is \textbf{even} with respect to $F\subset E$ if every $(1+\delta)$-near-minimum cut containing $e$ has even number of edges intersecting with $F$.

For a circuit-variant Held-Karp feasible solution $x^*_{\mathrm{circuit}}$, consider $x^*_{\mathrm{circuit}}$ as the edge weight and let $F$ be a spanning tree sampled from a maximum entropy distribution with approximate marginal $(1-\frac{1}{n}) x^*_{\mathrm{circuit}}$. We say an edge $e$ is \textbf{good} with respect to $x^*_{\mathrm{circuit}}$ if the probability that $e$ is even with respect to $F$ is at least $\rho$.
\end{defn}

\begin{thm}[Structure Theorem]
\label{t:structure}
Let $x^*_{\mathrm{circuit}}$ be a feasible solution to the circuit-variant Held-Karp relaxation, and let $\mu$ be a maximum entropy distribution over spanning trees with approximate marginal $(1-\frac{1}{n}) x^*_{\mathrm{circuit}}$. There exist small constants $\epsilon_1 , \epsilon_2 >0$ such that at least one of the following is true:\begin{itemize}
\item[1.] there exists a set $E^*\subset E$ such that $x(E^*)\geq \epsilon_1 n$ and every edge in $E^*$ is good with respect to $x^*_{\mathrm{circuit}}$;
\item[2.] there exist at least $(1-\epsilon_2)n$ edges that are nearly integral with respect to $x^*_{\mathrm{circuit}}$.
\end{itemize}
\end{thm}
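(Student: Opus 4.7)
The plan is to argue by a dichotomy on how close $x^*_{\mathrm{circuit}}$ is to being integral. First I would declare an edge \emph{fractional} if $x^*_e \in [\gamma, 1-\gamma]$. If at most $\epsilon_2 n$ edges fail to be nearly integral, then case~2 of the theorem already holds, so we may assume at least $\epsilon_2 n$ fractional edges exist; combining this assumption with the degree-$2$ constraints of \eqref{e:hkcircuit} forces the total $x^*$-mass on fractional edges to be $\Omega(n)$. The task then reduces to showing that a constant fraction of this fractional mass lies on \emph{good} edges, which would furnish an $E^*$ as required in case~1.

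For the core argument I would fix a fractional edge $e$ and study the probability, over $F \sim \mu$, that every $(1+\delta)$-near-minimum cut $C \ni e$ satisfies $|F \cap C|$ even. The key structural tool is the polygon/cactus representation of near-minimum cuts: all $(1+\delta)$-near-minimum cuts of a weighted graph can be encoded as contiguous arcs along a small collection of polygons sitting on an underlying cactus of minimum cuts, so the family of near-min cuts containing $e$ has bounded combinatorial complexity. After localizing to the polygons that $e$ touches, the joint "all even" event reduces to a constant number of essentially independent parity constraints on how $F$ crosses a few polygon arcs.

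To lower bound each individual parity event, I would use that max-entropy distributions over spanning trees are strongly Rayleigh and therefore negatively correlated; such measures admit central-limit-type bounds on linear statistics like $|F \cap C|$, which forces the parity of $|F \cap C|$ to be bounded away from $0$ and $1$ whenever $C$ carries $\Omega(1)$ fractional $x^*$-mass, a condition that is guaranteed by the assumption that $e$ itself is fractional and by the Held--Karp connectivity $x^*(\delta(C)) \ge 2$. A correlation-inequality argument for strongly Rayleigh measures then lifts these marginal bounds to a joint lower bound for the event that all the relevant parities are simultaneously even.

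The main obstacle I anticipate is precisely this simultaneous control: a naive union bound over potentially super-constantly many near-min cuts through $e$ would be far too weak, so the polygon/cactus structure must be used to cluster cuts and reduce the number of genuinely distinct parity constraints to $O(1)$, after which negative dependence can be invoked to produce a constant lower bound $\rho$ on the joint probability. Once this local statement is proved for \emph{almost every} fractional $e$ (losing at most a small fraction to edges whose local polygon structure is pathological), summing $x^*_e$ over the surviving good edges and combining with the $\Omega(n)$ lower bound on total fractional $x^*$-mass delivers $x^*(E^*) \ge \epsilon_1 n$, completing the proof of the dichotomy.
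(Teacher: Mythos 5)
This theorem is not proved in the paper at all: it is imported verbatim from Oveis Gharan, Saberi and Singh \cite{OSS} (the paper introduces it with ``Following are from Oveis Gharan, Saberi and Singh''), so there is no in-paper proof to compare against. Your outline does track the broad strategy of the proof in \cite{OSS} --- the dichotomy on nearly integral edges, the cactus/polygon representation of $(1+\delta)$-near-minimum cuts, and the use of the strongly Rayleigh property of max-entropy spanning-tree distributions to control parities of $|F\cap C|$. But as a proof it has genuine gaps, and the gaps sit exactly where the theorem's difficulty lies.

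First, the step you yourself flag as ``the main obstacle'' --- lower-bounding the probability that \emph{all} near-minimum cuts through $e$ are simultaneously crossed an even number of times --- is the entire content of the theorem, and it is not discharged by invoking ``a correlation-inequality argument for strongly Rayleigh measures.'' Parity events are not monotone (indeed not even determined by a single threshold on $|F\cap C|$), so negative association does not directly lift marginal parity bounds to a joint lower bound; \cite{OSS} instead carries out a lengthy case analysis on the tree hierarchy of near-minimum cuts, conditioning on the structure of $F$ inside and outside the relevant atoms, and only establishes goodness for carefully selected edges in each case. Second, and relatedly, your plan asserts that \emph{almost every} fractional edge is good; that is considerably stronger than what the argument in \cite{OSS} yields (note how small $\epsilon_1=1.875\cdot10^{-12}$ and $\rho=1.5\cdot10^{-24}$ are --- the conclusion only extracts a tiny constant fraction of the mass), and you give no mechanism for ruling out that a typical fractional edge lies on some near-minimum cut whose parity is almost surely odd. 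Third, a smaller but real issue: negating case~2 tells you that the $x^*$-mass not carried by nearly integral edges is at least roughly $\epsilon_2 n$ (using $x(E)=n$ from the degree constraints of \eqref{e:hkcircuit} and $x_e\le 1$), but that mass could in principle be spread over edges with $x^*_e<\gamma$, so ``at least $\epsilon_2 n$ edges with $x^*_e\in[\gamma,1-\gamma]$'' and hence ``$\Omega(n)$ fractional mass in that band'' does not follow without an additional argument. As written, the proposal is a faithful road map of the known proof rather than a proof.
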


\begin{lemma}
\label{l:case1}
Suppose that Case~1 of Theorem~\ref{t:structure} holds and $F$ is sampled from $\mu$. Let $T$ be the set of odd-degree vertices in $F$, then a minimum $T$-join $J$ satisfies\[
\E[c(J)]\leq c(x^*_{\mathrm{circuit}}) (\frac{1}{2}-\frac{\epsilon_1\delta\rho}{4(1+\delta)})
.\]
\end{lemma}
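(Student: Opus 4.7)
The plan is to construct a random fractional $T$-join $y$ directly from the sampled tree $F$, invoke the Edmonds--Johnson characterization (Theorem~\ref{t:tjoinpr}) to conclude $c(J)\le c(y)$ pointwise in $F$, and then take expectations. The baseline choice $y=x^*_{\mathrm{circuit}}/2$ is feasible and already delivers $\frac{1}{2}c(x^*_{\mathrm{circuit}})$; to beat it, I want to shave a little weight off the coordinates $y_e$ corresponding to good edges that happen to be \emph{even} with respect to $F$, exploiting that each $e\in E^*$ is even with probability at least $\rho$.

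Concretely, I set $y_e:=\left(\tfrac{1}{2}-\eta\cdot\mathbf{1}[e\in E^*\text{ and }e\text{ is even w.r.t.\ }F]\right)x^*_e$ with $\eta:=\delta/(2(1+\delta))$. Feasibility reduces to verifying $y(\delta(S))\ge 1$ whenever $|S\cap T|$ is odd, which I split on whether $\delta(S)$ is $(1+\delta)$-near-minimum. If $\delta(S)$ is near-minimum, then $|S\cap T|$ odd forces $|F\cap\delta(S)|$ to be odd (since $T$ is the odd-degree set of $F$); by Definition~\ref{d:goodedges} this rules out any even edge in $\delta(S)$, so no reduction occurs and $y(\delta(S))=x^*_{\mathrm{circuit}}(\delta(S))/2\ge 1$. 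If $\delta(S)$ is not near-minimum, then $x^*_{\mathrm{circuit}}(\delta(S))>2(1+\delta)$, and the choice of $\eta$ yields $y(\delta(S))\ge(\tfrac{1}{2}-\eta)x^*_{\mathrm{circuit}}(\delta(S))>(1-2\eta)(1+\delta)=1$.

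Taking expectation over $F\sim\mu$ and using $\Pr[e\text{ is even w.r.t.\ }F]\ge\rho$ for every $e\in E^*$ yields
\[
\E[c(J)]\le\E[c(y)]\le\tfrac{1}{2}c(x^*_{\mathrm{circuit}})-\eta\rho\sum_{e\in E^*}x^*_e c(e).
\]
To close the gap to the advertised bound I need to convert the measure-theoretic guarantee $x^*(E^*)\ge\epsilon_1 n$ from Theorem~\ref{t:structure} into the cost bound $\sum_{e\in E^*}x^*_e c(e)\ge\tfrac{\epsilon_1}{2}c(x^*_{\mathrm{circuit}})$, which combined with the value of $\eta$ delivers exactly the claimed improvement factor $\epsilon_1\delta\rho/(4(1+\delta))$.

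The main obstacle is precisely this last conversion, which fails for arbitrary metrics since cost may be concentrated outside $E^*$. Under the unit-weight graphical metric in which the Structure Theorem is actually invoked, however, $c(e)\ge 1$ for every edge, so $\sum_{e\in E^*}x^*_e c(e)\ge x^*(E^*)\ge\epsilon_1 n$, while simultaneously $c(x^*_{\mathrm{circuit}})\le 2n$ by LP duality against a trivial integer feasible solution such as a doubled spanning tree; together these supply the required ratio. Among the remaining steps, the only delicate point is the feasibility verification on near-minimum cuts, but it follows cleanly from the parity identity $|S\cap T|\equiv|F\cap\delta(S)|\pmod{2}$ together with Definition~\ref{d:goodedges}.
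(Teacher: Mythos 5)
The paper never proves this lemma --- it is imported, statement only, from Oveis Gharan, Saberi and Singh \cite{OSS} --- so there is no in-paper proof to compare against. Your reconstruction is essentially the argument of \cite{OSS}: the fractional $T$-join $y_e=\bigl(\tfrac12-\eta\,\mathbf{1}[e\in E^*\text{ and }e\text{ even}]\bigr)x^*_{\mathrm{circuit},e}$ with $\eta=\delta/(2(1+\delta))$; feasibility split on whether $\delta(S)$ is $(1+\delta)$-near-minimum, using the parity identity $|S\cap T|\equiv|F\cap\delta(S)|\pmod 2$ together with Definition~\ref{d:goodedges} in the near-minimum case and the slack $x^*_{\mathrm{circuit}}(\delta(S))>2(1+\delta)$ otherwise; then the expectation bound via $\Pr[e\text{ even}]\ge\rho$ for good edges. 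All of those steps check out, and the constants assemble to exactly $\epsilon_1\delta\rho/(4(1+\delta))$.

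The one soft spot is the step you yourself flag as the main obstacle: converting $x^*(E^*)\ge\epsilon_1 n$ into $\sum_{e\in E^*}x^*_ec(e)\ge\tfrac{\epsilon_1}{2}c(x^*_{\mathrm{circuit}})$ requires $c(x^*_{\mathrm{circuit}})\le 2n$, and your justification (``LP duality against a doubled spanning tree'') presumes $x^*_{\mathrm{circuit}}$ is an LP \emph{optimum}. As stated, the lemma only assumes $x^*_{\mathrm{circuit}}$ is feasible, and for a feasible solution of cost much larger than $n$ the claimed inequality can fail, since the guaranteed savings is only $\Theta(n)$ while the required savings scales with $c(x^*_{\mathrm{circuit}})$; the bound $c\le 2n$ is an implicit hypothesis inherited from \cite{OSS}, where the solution is the circuit LP optimum. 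In the one place this paper invokes the lemma (Case~A2, Lemma~\ref{l:fca2}), $x^*_{\mathrm{circuit}}=x^*+\mathbf{e}_{(s,t)}$ with $x^*$ the path LP optimum, and the needed bound still holds by a slightly different route: doubling a minimum spanning tree except along its tree path from $s$ to $t$ yields an $s$--$t$ Hamiltonian path of cost at most $2(n-1)-c(s,t)$, so $c(x^*_{\mathrm{circuit}})=c(x^*)+c(s,t)\le 2(n-1)<2n$. With that substitution your argument is complete; everything else, including $c(e)\ge 1$ under the unit-weight graphical metric, is fine.
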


\section{An LP-based new analysis of Hoogeveen's $5/3$-approximation algorithm}\label{s:5over3}

In this section, we present a new analysis of Hoogeveen's algorithm for the traveling salesman path problem with two prespecified endpoints. This analysis compares the output solution value to the LP optimum of the path-variant Held-Karp relaxation, although the LP optimum is never computed by the algorithm.

Hoogeveen's algorithm, when both endpoints are prespecified, is shown in Algorithm~\ref{a:hoogeveen}.

\begin{algorithm}[H]
\caption{Hoogeveen's algorithm \cite{H}}
\label{a:hoogeveen}
\begin{algorithmic}[1]
	\REQUIRE Complete graph $G=(V,E)$ with cost function $c:E\to\mathbb{R}_+$; endpoints $s,t\in V$.
	\ENSURE Hamiltonian path between $s$ and $t$
	\STATE Find a minimum spanning tree $H$ of $G$.
	\STATE Let $T\subset V$ be the set of the vertices with the `wrong' parity of degree in $H$:\\i.e., $T$ is the set of odd-degree internal points and even-degree endpoints in $H$.
	\STATE Find a minimum perfect matching $M$ on $G(T)$.
	\STATE Shortcut an Eulerian path of the multigraph $H\cup M$ to obtain a Hamiltonian path; output it.
\end{algorithmic}
\end{algorithm}

Let $x^*\in\mathbb{R}^E$ be the LP optimum of the path-variant Held-Karp relaxation, and $\mathsf{ALG}$ be the output of the Algorithm.

\begin{lemma}
\label{l:h1}
$c(H)\leq c(x^*)$.
\end{lemma}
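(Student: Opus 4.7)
The plan is to argue that $x^*$ is a feasible point of the spanning tree polytope of $G$, and then invoke the classical fact that this polytope is integral to conclude that $c(H)$, being the minimum cost of a spanning tree, is at most $c(x^*)$.

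First I would appeal to the equivalent formulation \eqref{e:hkpath2} given in Observation~\ref{o:hkequiv}. Recall that the spanning tree polytope on $G$ is described by
\[
\bigl\{\, x \in \mathbb{R}_+^E :\; x(E(S)) \leq |S|-1 \text{ for all } \emptyset \neq S \subseteq V,\; x(E) = n-1 \,\bigr\}.
\]
I need to check that $x^*$ satisfies each of these inequalities. For $S$ with $\{s,t\} \not\subseteq S$, the constraint $x^*(E(S)) \leq |S|-1$ is immediate from \eqref{e:hkpath2}. For $S$ with $\{s,t\} \subseteq S$, \eqref{e:hkpath2} gives the stronger bound $x^*(E(S)) \leq |S|-2 \leq |S|-1$, which is what is needed (note $|S| \geq 2$ automatically). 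Finally, summing the degree constraints of \eqref{e:hkpath1} over all vertices yields
\[
2\, x^*(E) \;=\; \sum_{v \in V} x^*(\delta(\{v\})) \;=\; 1 + 1 + 2(n-2) \;=\; 2n-2,
\]
so $x^*(E) = n-1$, establishing the equality constraint of the spanning tree polytope.

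Since $x^*$ is feasible for the spanning tree polytope, $c(x^*)$ is an upper bound on the minimum of $c(\cdot)$ over that polytope. By the theorem of Edmonds on the integrality of the spanning tree polytope, this minimum is attained at the incidence vector of a minimum spanning tree, whose cost is $c(H)$. Therefore $c(H) \leq c(x^*)$, as claimed. The main obstacle is essentially bookkeeping: correctly deriving $x^*(E) = n-1$ from the degree constraints and observing that the path-variant subtour constraint on sets containing both $s$ and $t$ implies the spanning tree constraint. No nontrivial combinatorial work is required beyond invoking the standard integrality of the spanning tree polytope.
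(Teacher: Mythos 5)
Your proof is correct and follows exactly the route the paper takes: it observes via formulation \eqref{e:hkpath2} that the path-variant Held--Karp polytope is contained in the spanning tree polytope, and concludes since $H$ is a minimum spanning tree. You simply spell out the containment check (including deriving $x^*(E)=n-1$ from the degree constraints) that the paper leaves implicit.
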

\begin{proof}
As can be seen from \eqref{e:hkpath2}, the path-variant Held-Karp polytope is contained in the spanning tree polytope. The lemma follows from this observation, since $H$ is a minimum spanning tree.
\end{proof}

Lemmas~\ref{l:h2} and \ref{l:h3} give two different bounds on the cost of $M$.

\begin{lemma}
\label{l:h2}
$c(M)\leq \frac{1}{2}\left\{c(x^*)+c(s,t)\right\}$.
\end{lemma}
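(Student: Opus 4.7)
The plan is to construct an explicit point $y\in\mathbb{R}_+^E$ that dominates a convex combination of $T$-joins, and then invoke Theorem~\ref{t:tjoinpr} to bound the cost of the minimum $T$-join (and hence of $M$). Concretely, I set
\[
y \;=\; \tfrac{1}{2}\bigl(x^* + \chi_{\{s,t\}}\bigr),
\]
where $\chi_{\{s,t\}}\in\{0,1\}^E$ is the characteristic vector of the single edge joining the two endpoints. By construction $c(y) = \tfrac{1}{2}\{c(x^*) + c(s,t)\}$, which matches the desired right-hand side.

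The bulk of the argument is to verify that $y \in P_T(G) + \mathbb{R}_+^E$ via the characterization in Theorem~\ref{t:tjoinpr}: I must check $y(\delta(S)) \geq 1$ for every $\emptyset\neq S\subsetneq V$ with $|S\cap T|$ odd. This I would split into two cases based on $|S\cap\{s,t\}|$. If $|S\cap\{s,t\}|\in\{0,2\}$, the edge $\{s,t\}$ lies inside $S$ or its complement, so it does not contribute to $y(\delta(S))$; however the path-variant LP in Definition~\ref{d:hkpath} directly yields $x^*(\delta(S))\geq 2$, giving $y(\delta(S))\geq 1$. If $|S\cap\{s,t\}|=1$, the edge $\{s,t\}$ crosses $S$, contributing $\tfrac{1}{2}$ to $y(\delta(S))$; combined with the LP constraint $x^*(\delta(S))\geq 1$, this again yields $y(\delta(S))\geq 1$. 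Notice that in this case analysis I never actually used the parity of $|S\cap T|$, only the LP feasibility of $x^*$; this makes the verification cleanly independent of which tree $H$ was chosen.

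From the characterization, some convex combination of $T$-joins is coordinatewise dominated by $y$, so the minimum $T$-join has cost at most $c(y)$. To transfer this to the matching $M$, I would use the triangle inequality: the set $T$ has even cardinality (because the number of odd-degree vertices in any graph is even, and $T$ is obtained from this set by toggling membership of $s$ and $t$, which preserves parity), so decomposing any $T$-join into a disjoint union of $T$-paths plus cycles and shortcutting each $T$-path to its direct edge produces a perfect matching on $T$ of no greater cost. Hence $c(M)$ is at most the minimum $T$-join cost, which is at most $c(y) = \tfrac{1}{2}\{c(x^*)+c(s,t)\}$.

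The only mildly subtle point is the parity mismatch at the endpoints: unlike Wolsey's original analysis of Christofides for the circuit case, here $T$ flips the parity requirement at $s$ and $t$, so $\tfrac{1}{2}x^*$ alone fails to be a fractional $T$-join (the LP only enforces degree one, not two, at each endpoint). The role of the $\tfrac{1}{2}\chi_{\{s,t\}}$ correction is precisely to compensate for this missing half-unit on the two odd cuts that separate $s$ from $t$, which makes the cut condition go through uniformly in both cases.
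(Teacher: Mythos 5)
Your proof is correct and is essentially the paper's argument: your $y=\tfrac12(x^*+\chi_{\{s,t\}})$ is exactly $\tfrac12 x^*_{\mathrm{circuit}}$ for the augmented circuit-feasible point the paper constructs, and your cut-by-cut verification simply inlines the Wolsey/Shmoys--Williamson bound that the paper invokes by citation. No substantive difference in approach.
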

\begin{proof}
Let $x^*_{\mathrm{circuit}} := x^*+\mathbf{e}_{(s,t)}$: i.e., $x^*_{\mathrm{circuit}}$ is obtained by `adding' the edge $(s,t)$ to $x^*$. Then $x^*_{\mathrm{circuit}}$ is a feasible solution to the circuit-variant Held-Karp relaxation (see \eqref{e:hkcircuit} and \eqref{e:hkpath1}). Let $\mathrm{HK_{circuit}}$ be the optimal value of the circuit-variant Held-Karp relaxation and we have\begin{eqnarray*}
c(M)&\leq& \frac{1}{2}\mathrm{HK_{circuit}}\\
& \leq& \frac{1}{2} c(x^*_{\mathrm{circuit}})\\
& =& \frac{1}{2} \left\{ c(x^*)+c(s,t) \right\}
,\end{eqnarray*}where the first inequality follows from \cite{W, SW}.
\end{proof}

\mbox{}

\begin{lemma}
\label{l:h3}
$c(M)\leq c(x^*)-c(s,t)$.
\end{lemma}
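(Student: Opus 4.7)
My plan is to construct an explicit integer $T$-join $F\subseteq E(H)$ of cost at most $c(x^*)-c(s,t)$, and then invoke the standard fact that under a metric cost, the minimum-cost perfect matching on $T$ is at most the cost of any $T$-join (proved by shortcutting $T$-paths via the triangle inequality). The construction is as follows: let $P$ denote the edge set of the unique $s$-$t$ path in the spanning tree $H$, and take $F := E(H)\setminus P$.

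The only substantive step is to verify that the set of odd-degree vertices of the subgraph $(V,F)$ is exactly $T$. For any vertex $v$ we have $\deg_F(v)=\deg_H(v)-|P\cap\delta(\{v\})|$, and $|P\cap\delta(\{v\})|$ equals $0$ if $v\notin V(P)$, $2$ if $v$ is an interior vertex of $P$, and $1$ if $v\in\{s,t\}$. Hence $\deg_F(v)$ has the same parity as $\deg_H(v)$ for every internal vertex, while the parity flips at each of $s$ and $t$. The odd-degree set of $F$ is therefore exactly the odd-degree internal vertices together with the even-degree endpoints of $H$---precisely $T$.

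Once $F$ is known to be a $T$-join, the lemma follows from the chain
\[
c(M)\ \leq\ c(F)\ =\ c(H)-c(P)\ \leq\ c(x^*)-c(s,t).
\]
The first inequality uses the metric shortcut reduction (since $F$ is a subforest of $H$, its edges decompose into $|T|/2$ edge-disjoint $T$-paths, each of which can be replaced by its direct edge to yield a perfect matching on $T$ of no greater cost); the equality is by construction of $F$; and the last inequality combines Lemma~\ref{l:h1} (giving $c(H)\leq c(x^*)$) with the triangle inequality applied to the $s$-$t$ path $P$ (giving $c(P)\geq c(s,t)$). I do not foresee any real obstacle. The only subtlety worth flagging is that the parity calculation relies essentially on the ``wrong-parity'' definition of $T$ from Hoogeveen's algorithm: deleting the $s$-$t$ tree path flips parities at exactly $s$ and $t$, which is precisely what is needed for $F$ to be a $T$-join and for the cost savings $-c(P)$ to absorb the desired $-c(s,t)$ term.
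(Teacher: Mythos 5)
Your proposal is correct and follows essentially the same route as the paper: both delete the $s$-$t$ tree path from $H$ to obtain a $T$-join of cost $c(H)-c(P)$, then bound $c(M)$ by that cost and apply Lemma~\ref{l:h1} together with the triangle inequality. Your parity verification is just a slightly more explicit rendering of the paper's argument, so there is nothing to add.
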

\begin{proof}
Let $P_{st}^H$ be the path between $s$ and $t$ on $H$. Consider an edge set $M':=H\setminus P_{st}^H$. Note that $M'$ is a $T$-join: $v\in V$ has even degree in $P_{st}^H$ if and only if $v$ is internal; thus, $v$ has even degree in the multigraph $H\cup M'=(H\cup H)\setminus P_{st}^H$ if and only if $v$ is an internal point, and this shows that $v$ has odd degree in $M'$ if and only if $v\in T$.

We have\begin{eqnarray*}
c(M) &\leq& c(M')\\
&=& c(H)-c(P_{st}^H)\\
&\leq& c(x^*) -c(s,t)
.\end{eqnarray*}The last inequality follows from Lemma~\ref{l:h1} and the triangle inequality.
\end{proof}

\begin{thm}
\label{t:general}
$c(\mathsf{ALG})\leq\frac{5}{3}c(x^*)$; therefore, Algorithm~\ref{a:hoogeveen} is a $5/3$-approximation algorithm, and the integrality gap of the path-variant Held-Karp relaxation is at most $5/3$.
\end{thm}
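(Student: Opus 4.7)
The plan is very direct, since all the hard work has been packaged into Lemmas~\ref{l:h1}, \ref{l:h2}, and \ref{l:h3}. First I would observe that, because $\mathsf{ALG}$ is obtained by shortcutting an Eulerian traversal of the multigraph $H\cup M$, the triangle inequality yields
\[
c(\mathsf{ALG})\;\le\;c(H)+c(M).
\]
Lemma~\ref{l:h1} already bounds the first term by $c(x^*)$, so the entire task reduces to bounding $c(M)$ in terms of $c(x^*)$ independently of $c(s,t)$.

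The key idea is that Lemmas~\ref{l:h2} and \ref{l:h3} give two bounds on $c(M)$ that depend on $c(s,t)$ with \emph{opposite} signs: one increasing in $c(s,t)$ and one decreasing. Taking the pointwise minimum therefore yields a bound on $c(M)$ that is worst when the two bounds agree. I would set
\[
\tfrac{1}{2}\bigl(c(x^*)+c(s,t)\bigr)\;=\;c(x^*)-c(s,t)
\]
and solve to find that the crossover occurs precisely at $c(s,t)=\tfrac{1}{3}c(x^*)$, at which point both bounds equal $\tfrac{2}{3}c(x^*)$. Hence in all cases $c(M)\le \tfrac{2}{3}c(x^*)$.

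Combining these gives
\[
c(\mathsf{ALG})\;\le\;c(H)+c(M)\;\le\;c(x^*)+\tfrac{2}{3}c(x^*)\;=\;\tfrac{5}{3}c(x^*),
\]
which is the desired bound; the integrality gap statement then follows because $c(x^*)$ lower bounds the optimal Hamiltonian path cost. There is essentially no obstacle here: the entire content of the theorem was extracted into the three preceding lemmas, and the only substantive step is the elementary optimization showing that the two conflicting bounds on $c(M)$ meet at exactly $\tfrac{2}{3}c(x^*)$. It is this balance point that both identifies the worst case (the ``critical case'' $c(s,t)=\tfrac{1}{3}c(x^*)$ highlighted in the introduction) and motivates the refined analysis of Section~\ref{s:gm}.
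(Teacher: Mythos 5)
Your proposal is correct and follows essentially the same route as the paper: both combine $c(\mathsf{ALG})\le c(H)+c(M)$ with Lemma~\ref{l:h1} and the minimum of the two bounds from Lemmas~\ref{l:h2} and \ref{l:h3}, the only cosmetic difference being that you locate the crossover point $c(s,t)=\frac{1}{3}c(x^*)$ explicitly while the paper verifies algebraically that the minimum of the two residual terms is nonpositive. No gaps.
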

\begin{proof}
We have
\begin{eqnarray}
c(\mathsf{ALG}) &\leq& c(H)+c(M)\nonumber\\
&\leq& c(x^*)+\min\left [\frac{1}{2}\left\{c(x^*)+c(s,t)\right\} , c(x^*)-c(s,t) \right ]\nonumber\\
&=& \frac{5}{3}c(x^*)+\min\left [\frac{1}{2}\left\{-\frac{1}{3}c(x^*)+c(s,t)\right\} , \frac{1}{3}c(x^*)-c(s,t) \right ]\nonumber\\
&\leq& \frac{5}{3}c(x^*)\label{e:crit}
,\end{eqnarray}where the second inequality follows from Lemmas~\ref{l:h1}, \ref{l:h2} and \ref{l:h3}.
\end{proof}

\paragraph{Open Question.} We can observe that the equality of \eqref{e:crit} is achieved when $c(s,t)=\frac{1}{3}c(x^*)$, and this is the critical case of this analysis that determines the integrality gap bound proven. Hence, if we can improve the performance guarantee near this critical case, such an improvement would lead to a better guarantee, and possibly a tighter integrality gap bound. For example, even if an algorithm does not outperform Hoogeveen's algorithm in general, if it yields a better ratio near the critical case, i.e., when $c(s,t)$ is close to $\frac{1}{3}c(x^*)$, this algorithm can be combined with Hoogeveen's to provide a better overall performance guarantee. Another possibility could be to devise a third upper bound on $c(M)$: if such a bound is better than the other two near the critical case, this would lead to a tighter integrality gap bound. It remains an open question whether the analysis for this critical case can be so improved.

\section{Unit-weight graphical metrics}\label{s:gm}

We illustrate in this section how the critical case discussed in Section~\ref{s:5over3} can be improved by introducing another algorithm that bears a better performance guarantee near the critical case. In particular, we show how the present results combine with the recent work due to Oveis Gharan, Saberi and Singh \cite{OSS} on a special case of the traveling salesman circuit problem, to yield an improved constructive integrality gap proof of the path-variant Held-Karp relaxation under a similar special case.

Oveis Gharan et al. consider the case where the metric is a shortest path metric defined by an unweighted undirected graph, and give a $\left(\frac{3}{2}-\epsilon_0 \right)$-approximation algorithm for some constant $\epsilon_0>0$. We will consider the traveling salesman path problem with two specified endpoints under the same class of metric, and show how to modify Oveis Gharan, Saberi and Singh's algorithm for the path case and that, when $c(s,t)$ is close to $\frac{1}{3}c(x^*)$, this modified algorithm carries a performance guarantee that is slightly better than $5/3$.

\begin{algorithm}[b!]
\caption{Algorithm for the shortest-path metric defined by an unweighted undirected graph}
\label{a:sc}
\begin{algorithmic}[1]
	\REQUIRE Complete graph $G=(V,E)$ with cost function $c:E\to\mathbb{R}_+$; endpoints $s,t\in V$.
	\ENSURE Hamiltonian path between $s$ and $t$

	\STATE $x^*\gets$optimal solution to the path-variant Held-Karp relaxation
	\IF {$c(s,t)=(\frac{1}{3}+\alpha)c(x^*)$ for $\alpha\in[-\sigma_l,\sigma_u]$}
		\IF[Case A1]{at least $(1-\epsilon'_2)(n-1)$ edges are nearly integral with respect to $x^*$}
			\STATE Find a minimum spanning subgraph $F'$ containing all the nearly integral edges
			\STATE Find a minimum spanning tree $F$ of $F'$
			\STATE Let $T$ be the set of odd-degree internal points and even-degree endpoints in $F$
			\STATE Compute a minimum $T$-join $J$; $L \gets F\cup J$
		\ELSE[Case A2]
			\STATE $x^*_{\mathrm{circuit}} := x^*+\mathbf{\mathbf{e}}_{(s,t)}$
			\STATE Sample spanning tree $F$ from max-entropy distrib. with approx. marginal $(1-\frac{1}{n}) x^*_{\mathrm{circuit}}$\hspace*{-5em}
			\STATE Let $T$ be the set of odd-degree vertices in $F$
			\STATE Compute a minimum $T$-join $J$; $L_0 \gets F\cup J$
			\STATE \textbf{if} $(s,t)\in L_0$ \textbf{then} $L\gets L_0 \setminus\{(s,t)\}$ \textbf{else} $L\gets L_0 \cup\{(s,t)\}$ \textbf{end if}
		\ENDIF
	\ELSE[Case B, Hoogeveen's algorithm]
		\STATE Find a minimum spanning tree $H$ of $G$.
		\STATE Let $T$ be the set of odd-degree internal points and even-degree endpoints in $H$
		\STATE Compute a minimum $T$-join $J$; $L \gets F\cup J$
	\ENDIF
	\STATE Shortcut an Eulerian path of the multigraph $L$ to obtain a Hamiltonian path; output it.
\end{algorithmic}
\end{algorithm}

Algorithm~\ref{a:sc} gives the entire algorithm for the traveling salesman path problem under the special case. It first computes the LP optimum $x^*$. If $c(s,t)$ is close to $\frac{1}{3}c(x^*)$, we run a modified version of Oveis Gharan, Saberi and Singh's algorithm (Cases~A1 and A2); otherwise, we invoke Hoogeveen's algorithm (Case~B). Let $\mathsf{ALG}$ be the output of the algorithm. Parameters $\sigma_l,\sigma_u$ and $\epsilon'_2$ are to be chosen later.

First we show that we can have a Structure Theorem analogous to Theorem~\ref{t:structure} by adjusting $\epsilon_2$ and replacing $n$ with $(n-1)$ in Case~2. The following corollary states that either there are good edges of significant weight with respect to $x^*_{\mathrm{circuit}}$ or there are many nearly integral edges with respect to $x^*$.
\begin{cor}
\label{c:pathstructure}
Let $x^*$ be a feasible solution to the path-variant Held-Karp relaxation and  $x^*_{\mathrm{circuit}} := x^*+\mathbf{\mathbf{e}}_{(s,t)}$. Let $\mu$ be a maximum entropy distribution over spanning trees with approximate marginal $(1-\frac{1}{n}) x^*_{\mathrm{circuit}}$. There exist small constants $\epsilon_1 , \epsilon'_2 >0$ such that at least one of the following is true:\begin{itemize}
\item[1.] there exists a set $E^*\subset E$ such that $x(E^*)\geq \epsilon_1 n$ and every edge in $E^*$ is good with respect to $x^*_{\mathrm{circuit}}$;
\item[2.] there exist at least $(1-\epsilon'_2)(n-1)$ edges that are nearly integral with respect to $x^*$.
\end{itemize}
\end{cor}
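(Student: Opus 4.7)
The plan is to reduce the corollary to a direct application of the Structure Theorem~\ref{t:structure}, exploiting the fact that $x^*_{\mathrm{circuit}}$ and $x^*$ differ only on the single edge $(s,t)$. First I would observe that $x^*_{\mathrm{circuit}} := x^* + \mathbf{e}_{(s,t)}$ is a feasible solution to the circuit-variant Held-Karp relaxation: this is already argued in the proof of Lemma~\ref{l:h2}, where we check that the cut and degree constraints of \eqref{e:hkcircuit} follow from those of \eqref{e:hkpath1} once we add the edge $(s,t)$. This lets us invoke Theorem~\ref{t:structure} on $x^*_{\mathrm{circuit}}$, yielding the original dichotomy with constants $\epsilon_1, \epsilon_2$.

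If Case~1 of Theorem~\ref{t:structure} holds, we are immediately in Case~1 of the corollary, because the good-edge set $E^*$ and its $x^*_{\mathrm{circuit}}$-weight are defined with respect to the very same $x^*_{\mathrm{circuit}}$ and $\mu$ that appear in the statement. No translation between $x^*$ and $x^*_{\mathrm{circuit}}$ is needed here, so this case goes through verbatim.

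The slightly more delicate case is transferring the ``nearly integral'' conclusion. Suppose Case~2 of Theorem~\ref{t:structure} holds, so there are at least $(1-\epsilon_2)n$ edges $e$ with $(x^*_{\mathrm{circuit}})_e \geq 1 - \gamma$. The key observation is that $x^*_{\mathrm{circuit}}$ and $x^*$ agree on every edge other than $(s,t)$; hence for any $e \neq (s,t)$, nearly integrality with respect to $x^*_{\mathrm{circuit}}$ and nearly integrality with respect to $x^*$ are the same condition. Removing $(s,t)$ from the nearly-integral set (in case it appears there but is not nearly integral with respect to $x^*$) leaves at least $(1-\epsilon_2)n - 1$ edges that are nearly integral with respect to $x^*$, which we would like to be at least $(1-\epsilon'_2)(n-1)$. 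Solving this inequality yields the requirement $\epsilon'_2 \geq \epsilon_2 \cdot \tfrac{n}{n-1}$, which is satisfied by choosing, say, $\epsilon'_2 := \epsilon_2 + \tfrac{1}{n-1}$ (or any fixed constant slightly larger than $\epsilon_2$ once $n$ is taken to be sufficiently large, consistent with the asymptotic constant regime already adopted in Section~2 from \cite{OSS}).

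The only real obstacle is this constant-bookkeeping step; there is no new combinatorial content. I would make this explicit by stating the choice of $\epsilon'_2$ and verifying the inequality, so that the corollary can then be used downstream with the same parameter regime as in \cite{OSS}.
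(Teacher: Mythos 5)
Your proposal follows the same route as the paper: invoke Theorem~\ref{t:structure} on $x^*_{\mathrm{circuit}}$, note that Case~1 transfers verbatim, and in Case~2 lose at most the single edge $(s,t)$, so at least $(1-\epsilon_2)n-1$ edges remain nearly integral with respect to $x^*$. Your inequality $\epsilon'_2\geq\epsilon_2\cdot\frac{n}{n-1}$ is the right condition, but your resolution of it does not actually produce what the statement requires, namely a \emph{constant} $\epsilon'_2$: the choice $\epsilon'_2=\epsilon_2+\frac{1}{n-1}$ depends on $n$, and the fallback ``any fixed constant slightly larger than $\epsilon_2$ for $n$ sufficiently large'' leaves small $n$ unhandled (e.g., for $n=2$ one would need $\epsilon'_2\geq 2\epsilon_2$). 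The paper closes exactly this gap by fixing $\epsilon'_2=6\cdot10^{-2}$ and splitting into two regimes: for $n\geq 20$ the continuous bound $(1-\epsilon_2)n-1=(1-\epsilon_2)(n-1)-\epsilon_2\geq(1-\frac{20}{19}\epsilon_2)(n-1)$ suffices, and for $n\leq 19$ it uses that the number of nearly integral edges is an integer, so $\lceil(1-\epsilon_2)n\rceil=n$ and hence at least $n-1\geq(1-\epsilon'_2)(n-1)$ edges survive. Adding this small integrality argument for small $n$ would make your proof complete.
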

\begin{proof}
By Theorem~\ref{t:structure}, at least one of the two cases of Theorem~\ref{t:structure} holds. Case~1 of Theorem~\ref{t:structure} and Case~1 of this corollary are identical, so consider when Case~2 of Theorem~\ref{t:structure} holds.

Recall that $\epsilon_2$ was chosen as $5\cdot 10^{-2}$; we choose $\epsilon'_2=6\cdot 10^{-2}$.

Suppose $n\leq 19$. $x^*_{\mathrm{circuit}}$ has at least $(1-\epsilon_2)n$ nearly integral edges; thus, $x^*$ has at least $\lceil (1-\epsilon_2)n\rceil - 1 = n-1\geq (1-\epsilon'_2)(n-1)$ nearly integral edges.

Suppose $n\geq 20$. $x^*$ has at least\begin{eqnarray*}
(1-\epsilon_2)n-1&=&(1-\epsilon_2)(n-1)-\epsilon_2\\
&\geq& (1-\frac{20}{19}\epsilon_2)(n-1)\\
&\geq& (1-\epsilon'_2)(n-1)
\end{eqnarray*}nearly integral edges.
\end{proof}

\begin{lemma}
\label{l:fca1}
In Case~A1, $c(\mathsf{ALG})\leq (\frac{5}{3}-C_{A1})c(x^*)$ for some $c_{A1}>0$.
\end{lemma}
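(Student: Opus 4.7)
}
The intuition is that when $(1-\epsilon'_2)(n-1)$ edges are nearly integral with respect to $x^*$, the LP solution is close to the incidence vector of a Hamiltonian path between $s$ and $t$, so the rounded solution returned by the algorithm should be nearly optimal---much better than $\frac{5}{3}$. The plan is to prove a bound of the shape $c(\mathsf{ALG}) \leq (1 + O(\gamma + \epsilon'_2))\, c(x^*)$, from which $C_{A1} > 0$ will follow for the paper's parameter choices.

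First, I would bound $c(F)$. By Observation~\ref{o:hkequiv}, $x^*$ lies in the spanning tree polytope of $G$, so the minimum spanning tree of $G$ has cost at most $c(x^*)$. Since every $e$ in the set $N$ of nearly integral edges satisfies $x^*_e \geq 1-\gamma$, the rescaled vector $x^*/(1-\gamma)$ dominates the indicator of $N$ while remaining feasible for a relaxed spanning-subgraph LP. Combined with a contraction argument on $G/N$, this yields $c(F) \leq c(F') \leq (1 + O(\gamma))\, c(x^*)$.

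Next, I would establish the structural bound $|T| = O(\epsilon'_2 n)$. The LP degree constraints ($x^*(\delta(v))=2$ for internal $v$, $=1$ for $s,t$) combined with $x^*_e \geq 1-\gamma$ on $N$ imply that, for $\gamma<1/3$, each internal vertex is incident to at most two edges of $N$ and each endpoint to at most one; hence $N$ has maximum degree $2$ and is a disjoint union of paths and cycles. A degree-deficit count shows that the number of $N$-components is $O(\epsilon'_2 n)$, so the number of bridging edges added to form $F$, and in turn $|T|$, are both $O(\epsilon'_2 n)$.

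The main obstacle is bounding $c(J)$: the naive pairing bound $\tfrac{|T|}{2}\cdot\mathrm{diam}(G)$ is useless, and the standard Christofides-style bound $c(J) \leq \frac{1}{2}(c(x^*)+c(s,t))$ only recovers the $\frac{5}{3}$ threshold. Two potential routes are (i) to construct a fractional $T$-join strictly smaller than $\tfrac{1}{2}\, x^*_{\mathrm{circuit}}$ by reducing mass on cuts that are already parity-correct due to the high concentration of $x^*$ on $N$, or (ii) to build an explicit integral $T$-join by pairing the $O(\epsilon'_2 n)$ wrong-parity vertices through the off-path subtrees of $F$ and using the triangle inequality to bound the cost by $O(\epsilon'_2)\, c(x^*)$. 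Either route combines with the previous two steps to give $c(\mathsf{ALG}) = c(F)+c(J) \leq (1 + O(\gamma+\epsilon'_2))\, c(x^*)$, which for the chosen values of $\gamma$ and $\epsilon'_2$ lies below $\frac{5}{3}$ by an explicit constant $C_{A1} > 0$.
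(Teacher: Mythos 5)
Your structural observations are sound and match the paper's: the nearly integral edges form disjoint paths and cycles (by the degree constraints and $\gamma<1/3$), the number of components and hence of bridging edges is $O((\epsilon'_2+\gamma)n)$, and $c(F)\leq(1+O(\gamma+\epsilon'_2))c(x^*)$ using the unit-weight property so that each bridging edge costs $1$. But the target you set, $c(\mathsf{ALG})\leq(1+O(\gamma+\epsilon'_2))c(x^*)$, is not what the paper proves and is not achievable by your routes; the paper's bound is roughly $\frac{3}{2(1-\gamma)}+2\epsilon'_2+2\gamma\approx\frac{3}{2}$, i.e., the $T$-join still costs about half of $c(x^*)$, and the gain over $5/3$ comes from beating the Christofides-style bound $\frac{1}{2}\{c(x^*)+c(s,t)\}$ by the extra $\frac{1}{2}c(s,t)\approx\frac{1}{6}c(x^*)$, not from making $J$ negligibly cheap.

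Concretely, your route (ii) fails: $|T|=O(\epsilon'_2 n)$ does not imply $c(J)=O(\epsilon'_2)c(x^*)$. If the nearly integral edges form $\Theta(\epsilon'_2 n)$ paths each of length $\Theta(1/\epsilon'_2)$, the wrong-parity vertices can be pairwise at metric distance $\Theta(1/\epsilon'_2)$, so \emph{any} $T$-join costs $\Omega(\epsilon'_2 n\cdot\frac{1}{\epsilon'_2})=\Omega(n)=\Omega(c(x^*))$ --- a constant fraction, not $O(\epsilon'_2)$. Route (i) points in the right general direction but leaves out the entire technical content: the paper explicitly constructs the fractional $T$-join $y$ (weight $1$ on tree edges outside $S=S'\cap F$, weight $x^*_e$ off the tree, weight $\frac{x^*_e}{2(1-\gamma)}$ on $S$) and verifies feasibility for the Edmonds--Johnson polyhedron \eqref{e:tjoinpr} by a parity case analysis on near-integral cuts: cuts not separating $s$ from $t$ have $|\delta(U)\cap F|$ odd and $x^*(\delta(U))\geq 2$, while cuts separating $s$ from $t$ have $|\delta(U)\cap F|$ even and nonzero. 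This verification, which is where the path-variant constraints (the cuts with $x^*(\delta(S))\geq 1$) actually enter, is missing from your proposal, so the key step of the lemma remains unproven.
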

\begin{proof}
The following proof is adapted from \cite{OSS} and modified for the path case.

Let $S'$ be the set of nearly integral edges. Since the metric is defined by an unweighted connected graph, $c(F')=c(S')+|F'\setminus S'|\leq\frac{c(x(S'))}{1-\gamma}+|F'\setminus S'|$. From $\gamma<\frac{1}{3}$, we know that $S'$ is a union of disjoint cycles and paths and the lengths of cycles are at least $\frac{1}{\gamma}$. Thus, $|F\cap S'|\geq (n-1)(1-\epsilon'_2)(1-\gamma)$ and $|F\setminus S'|\leq (n-1)(\epsilon'_2+\gamma)\leq c(x^*)(\epsilon'_2+\gamma)$. Let $S=S'\cap F$.

We construct a feasible solution $y$ to \eqref{e:tjoinpr} as follows. Note that Theorem~\ref{t:tjoinpr} implies $c(J)\leq c(y)$.\[
y_e = \begin{cases}
1&\textrm{if }e\in F\setminus S\\
x_e^*&\textrm{if }e\in E\setminus F\\
\frac{x_e^*}{2(1-\gamma)}&\textrm{if }e\in S
\end{cases}
\]Let $(U,\bar U)$ be any cut that has an odd number of vertices in $T$ on one side. If there exists an edge $e\in (F\setminus S)\cap \delta(U)$, then $y(\delta(U))\geq y_e = 1$. So suppose from now on that $\delta(U)\cap F\subset S$. Then $\delta(U)\cap S=\delta(U)\cap F$.

If $s$ and $t$ lie on the same side of the cut, $U$ contains odd number of odd-degree vertices, and thus $|\delta(U)\cap F|$ is odd. We have $x^*(\delta(U))\geq 2$ from the Held-Karp formulation and thus\[
\begin{cases}
y(\delta(U))\geq x^*(\delta(U)\setminus F)\geq 1&\textrm{if }|\delta(U)\cap F|=1\\
y(\delta(U))\geq y(\delta(U)\cap S)\geq 3\frac{1-\gamma}{2(1-\gamma)} > 1&\textrm{if }|\delta(U)\cap S|\geq 3
.\end{cases}
\]

If $(U,\bar U)$ separates $s$ and $t$, then $U$ contains even number of odd-degree vertices, and thus $|\delta(U)\cap F|$ is even. We have $(\delta(U)\cap F)\neq\emptyset$ since $F$ is connected and\[
y(\delta(U))\geq y(\delta(U)\cap S)\geq 2\frac{1-\gamma}{2(1-\gamma)}=1
.\]

Thus $y$ is a feasible solution to $\mathrm{LP_{T-join}}$. Now,\begin{eqnarray*}
c(\mathsf{ALG})&\leq&c(F)+c(y)\\
&\leq&\frac{c(x^*(S))}{1-\gamma}+c(F\setminus S)+c(F\setminus S)+c(x^*(E\setminus F))+\frac{c(x^*(S))}{2(1-\gamma)}\\
&\leq&\frac{3c(x^*(S))}{2(1-\gamma)}+2c(x^*)(\epsilon'_2+\gamma)+c(x^*(E\setminus S))\\
&\leq&c(x^*)(\frac{3}{2(1-\gamma)}+2\epsilon'_2+2\gamma)\\
&\leq&c(x^*)(\frac{5}{3}-C_{A1})
\end{eqnarray*}for some $C_{A1}>0$. For example, we can choose $c_{A1}=4\cdot 10^{-2}$.
\end{proof}

\begin{lemma}
\label{l:fca2}
In Case~A2, $\E[c(\mathsf{ALG})]\leq (\frac{5}{3}-C_{A2})c(x^*)$ for some $C_{A2}>0$.
\end{lemma}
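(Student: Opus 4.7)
The plan is to apply Lemma~\ref{l:case1} to $x^*_{\mathrm{circuit}}$ and then account for the $(s,t)$-adjustment that Algorithm~\ref{a:sc} performs in Case~A2. By hypothesis we are outside Case~2 of Corollary~\ref{c:pathstructure}, so Case~1 of that corollary holds; Lemma~\ref{l:case1} therefore yields $\E[c(J)] \le (\tfrac{1}{2} - \beta)\,c(x^*_{\mathrm{circuit}})$, where I set $\beta := \epsilon_1 \delta \rho / (4(1+\delta)) > 0$. Separately, $(1 - \tfrac{1}{n}) x^*_{\mathrm{circuit}}$ lies in the spanning tree polytope (coordinate sum $n-1$, all subtour bounds satisfied with slack), so the approximate-marginal property of $\mu$ gives $\E[c(F)] \le (1 + \tfrac{\nu}{n^k})(1 - \tfrac{1}{n})\,c(x^*_{\mathrm{circuit}}) \le c(x^*_{\mathrm{circuit}})$. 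Combining,
\[
\E[c(L_0)] \;\le\; \bigl(\tfrac{3}{2} - \beta\bigr)\, c(x^*_{\mathrm{circuit}}) \;=\; \bigl(\tfrac{3}{2} - \beta\bigr)\bigl(c(x^*) + c(s,t)\bigr).
\]

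The next step is to show that the $(s,t)$ adjustment almost always subtracts, rather than adds, $c(s,t)$. The key observation is that the path-variant Held-Karp constraints force $x^*_{(s,t)} = 0$: since $x^*(\delta(\{s,t\})) = x^*(\delta(\{s\})) + x^*(\delta(\{t\})) - 2 x^*_{(s,t)} = 2 - 2 x^*_{(s,t)}$ and the relaxation requires $x^*(\delta(\{s,t\})) \ge 2$, we must have $x^*_{(s,t)} \le 0$. Thus $x^*_{\mathrm{circuit},(s,t)} = 1$, so the target marginal for the edge $(s,t)$ in $\mu$ is $1 - \tfrac{1}{n}$. Two-sided approximation of this marginal by OSS's max-entropy construction then gives $\Pr[(s,t) \in F] \ge 1 - \tfrac{1}{n} - O(n^{-k})$, and hence $\Pr[(s,t) \in L_0] \ge 1 - O(\tfrac{1}{n})$. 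By the case split defining $L$ in Algorithm~\ref{a:sc},
\[
\E[c(L)] \;=\; \E[c(L_0)] + \bigl(1 - 2\Pr[(s,t) \in L_0]\bigr) c(s,t) \;\le\; \E[c(L_0)] - c(s,t) + O\!\left(\tfrac{c(s,t)}{n}\right).
\]

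Substituting $c(s,t) = (\tfrac{1}{3} + \alpha) c(x^*)$ with $\alpha \in [-\sigma_l, \sigma_u]$ and simplifying gives
\[
\E[c(L)] \;\le\; \left[\tfrac{5}{3} - \tfrac{4\beta}{3} + \bigl(\tfrac{1}{2} - \beta\bigr)\alpha\right] c(x^*) + O\!\left(\tfrac{c(x^*)}{n}\right).
\]
Because the metric is a unit-weight graphical metric, $c(x^*) \ge n - 1$, so the $O(c(x^*)/n)$ error is $O(1)$ and is harmless once $n$ exceeds an absolute constant (the finitely many smaller instances can be solved exactly). Choosing $\sigma_u$ and $\sigma_l$ small enough that $(\tfrac{1}{2} - \beta)\sigma_u < \tfrac{4\beta}{3}$ produces $C_{A2} := \tfrac{4\beta}{3} - (\tfrac{1}{2} - \beta)\sigma_u > 0$. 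Balancing with Case~B, which from Theorem~\ref{t:general} improves on $\tfrac{5}{3}$ by at least $\min(\sigma_u, \sigma_l/2)$, one sets $\sigma_u = \sigma_l/2 = C_{A2}$ and solves $C_{A2}(\tfrac{3}{2} - \beta) = \tfrac{4\beta}{3}$ to obtain $C_{A2} = \Theta(\beta) > 0$. The main obstacle is rigorously justifying the lower bound $\Pr[(s,t) \in F] \ge 1 - \tfrac{1}{n} - O(n^{-k})$, since Definition~\ref{d:approxm} as stated only provides an upper bound on marginals; the matching lower bound must be extracted from the specific $\lambda$-uniform construction used in OSS, where the marginals are known to concentrate around the prescribed target on both sides when that target lies strictly inside the spanning tree polytope.
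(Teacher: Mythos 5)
Your overall strategy matches the paper's: bound $\E[c(F)]$ via the approximate marginals, bound $\E[c(J)]$ via Lemma~\ref{l:case1} applied to $x^*_{\mathrm{circuit}}$, and then argue that the $(s,t)$-adjustment in Case~A2 subtracts $c(s,t)$ with probability $1-O(1/n)$. Your observation that the constraint $x^*(\delta(\{s,t\}))\geq 2$ forces $x^*_{(s,t)}=0$, hence $x^*_{\mathrm{circuit},(s,t)}=1$, is correct and is implicitly what makes the numbers work. However, there is a genuine gap exactly where you flag one: you need $\Pr[(s,t)\in F]\geq 1-O(1/n)$, and you propose to obtain it from a two-sided concentration of the marginals in the OSS $\lambda$-uniform construction. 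Definition~\ref{d:approxm} provides only an \emph{upper} bound on marginals, and importing an unproved two-sided property from outside the paper's stated toolkit leaves the proof incomplete as written.

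The paper closes this hole with an elementary trick you missed: since $F$ is a spanning tree, $|F|=n-1$ deterministically, so
\[
\Pr[(s,t)\in F]\;=\;\E[|F|]-\E[|F\setminus\{(s,t)\}|]\;=\;(n-1)-\E[|F\setminus\{(s,t)\}|],
\]
and the one-sided upper bound on marginals applied to every edge \emph{other than} $(s,t)$ gives $\E[|F\setminus\{(s,t)\}|]\leq (1+\frac{\nu}{n^k})(n-2+\frac{1}{n})$, whence $\Pr[(s,t)\in F]\geq 1-\frac{7}{5n}$. No lower bound on individual marginals is ever needed. A secondary, minor difference: you absorb the $O(c(x^*)/n)$ error by appealing to $c(x^*)\geq n-1$ and solving small instances exactly, whereas the paper tracks the $\frac{1}{n}$ terms explicitly and shows their net coefficient, $-\frac{1}{n}(\frac{2}{15}-2\alpha)$, is actually favorable for $\alpha$ near $0$; your handling is acceptable but the paper's avoids the extra case. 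With the counting identity above substituted for your concentration claim, your argument becomes essentially the paper's proof.
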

\begin{proof}
First we have\begin{eqnarray*}
\E[c(F)] &\leq& c\left((1+\frac{\nu}{n^k})(1-\frac{1}{n})x^*_{\mathrm{circuit}}\right)\\
&\leq& (1+\frac{1}{5n^2})(1-\frac{1}{n})(\frac{4}{3}+\alpha)c(x^*)\\
&\leq& (1-\frac{4}{5n})(\frac{4}{3}+\alpha)c(x^*)
.\end{eqnarray*}From Lemma~\ref{l:case1},\[
\E[c(J)]\leq(\frac{4}{3}+\alpha)c(x^*)(\frac{1}{2}-\frac{\epsilon_1\delta\rho}{4(1+\delta)})
.\]

We have\begin{eqnarray*}
\Pr[(s,t)\in L_0] &\geq& \Pr[(s,t)\in F]\\
&=& n-1-\E[|F\setminus(s,t)|]\\
&\geq& n-1-(n-2+\frac{1}{n})(1+\frac{\nu}{n^k})\\
&\geq& n-1-(n-2+\frac{1}{n})(1+\frac{1}{5n^2})\\
&\geq& 1-\frac{7}{5n}
\end{eqnarray*}and hence\begin{eqnarray*}
\E[c(\mathsf{ALG})]&\leq&\E[c(F)]+\E[c(J)]-(1-\frac{7}{5n})c(s,t)+\frac{7}{5n}c(s,t)\\
&\leq&c(x^*)\left\{(1-\frac{4}{5n})(\frac{4}{3}+\alpha)+(\frac{4}{3}+\alpha)(\frac{1}{2}-\frac{\epsilon_1\delta\rho}{4(1+\delta)})-(1-\frac{7}{5n})(\frac{1}{3}+\alpha)+\frac{7}{5n}(\frac{1}{3}+\alpha)\right\}\\
&=&c(x^*)\left\{(\frac{5}{3}-\frac{\epsilon_1\delta\rho}{3(1+\delta)})+\alpha(\frac{1}{2}-\frac{\epsilon_1\delta\rho}{4(1+\delta)}) -\frac{1}{n} (\frac{2}{15}-2\alpha) \right\}\\
&\leq&c(x^*)(\frac{5}{3}-C_{A2})
\end{eqnarray*}for some $C_{A2}>0$ by choosing sufficiently small $\sigma_l,\sigma_u>0$. For example, we can choose $\sigma_l=7.8\cdot 10^{-52}$, $\sigma_u=3.9\cdot 10^{-52}$ and $C_{A2}=3.9\cdot 10^{-52}$.
\end{proof}

\begin{lemma}
\label{l:fcb}
In Case~B, $c(\mathsf{ALG})\leq (\frac{5}{3}-C_B)c(x^*)$ for some $C_B>0$.
\end{lemma}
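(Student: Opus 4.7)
The plan is to revisit the proof of Theorem~\ref{t:general} and observe that inequality~\eqref{e:crit} is actually strict, with a quantitative slack, whenever $c(s,t)$ is bounded away from $\tfrac{1}{3}c(x^*)$. Since Case~B of Algorithm~\ref{a:sc} runs exactly Hoogeveen's algorithm under the hypothesis that $c(s,t) = (\tfrac{1}{3}+\alpha)c(x^*)$ with $\alpha \notin [-\sigma_l,\sigma_u]$, this slack is precisely what we need.

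More concretely, I would first note that Lemmas~\ref{l:h1}, \ref{l:h2}, and \ref{l:h3} continue to hold in this case, since they make no assumption on the metric or on $c(s,t)$. Substituting $c(s,t) = (\tfrac{1}{3}+\alpha)c(x^*)$ into the expression from the third line of the displayed chain in the proof of Theorem~\ref{t:general} yields
\[
c(\mathsf{ALG}) \;\leq\; \tfrac{5}{3}c(x^*) + \min\!\left[\,\tfrac{\alpha}{2}\,,\,-\alpha\,\right]\cdot c(x^*).
\]
For $\alpha \geq \sigma_u$, the second argument of the minimum gives $-\alpha \leq -\sigma_u$; for $\alpha \leq -\sigma_l$, the first argument gives $\tfrac{\alpha}{2} \leq -\tfrac{\sigma_l}{2}$. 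Hence in either sub-case of Case~B,
\[
c(\mathsf{ALG}) \;\leq\; \left(\tfrac{5}{3} - C_B\right) c(x^*),\quad\text{where } C_B := \min\!\left(\sigma_u,\tfrac{\sigma_l}{2}\right) > 0.
\]

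The only thing to check is that $\sigma_l,\sigma_u > 0$ are already fixed by the analyses of Cases~A1 and A2, so $C_B$ is a well-defined positive constant that can be named explicitly (e.g., using the values from Lemma~\ref{l:fca2}, $C_B = 3.9\cdot 10^{-52}$ suffices). There is no real obstacle here: the entire lemma is a direct quantitative reading of the critical-case argument already worked out in Section~\ref{s:5over3}, and the only slightly delicate point is simply making sure to pick the worse (smaller) of the two linear slopes on either side of the critical point $\alpha = 0$.
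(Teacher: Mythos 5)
Your proof is correct and is essentially the paper's own argument: the paper likewise splits on whether $c(s,t)$ lies below $(\tfrac{1}{3}-\sigma_l)c(x^*)$ or above $(\tfrac{1}{3}+\sigma_u)c(x^*)$, applies Lemma~\ref{l:h2} in the first sub-case and Lemma~\ref{l:h3} in the second, and arrives at the same constant $C_B=\min(\tfrac{\sigma_l}{2},\sigma_u)$. Your phrasing via the $\min[\tfrac{\alpha}{2},-\alpha]$ slack in \eqref{e:crit} is just a compact restatement of that same case analysis.
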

\begin{proof}
Suppose that $c(s,t)<(\frac{1}{3}-\sigma_l)c(x^*)$. From Lemmas~\ref{l:h1} and \ref{l:h2}, it follows that\begin{eqnarray*}
c(\mathsf{ALG})&\leq& c(F)+c(J)\\
&<& c(x^*)+\frac{1}{2} \left\{c(x^*)+ (\frac{1}{3}-\sigma_l)c(x^*) \right\}\\
&=& \left(\frac{5}{3}-\frac{\sigma_l}{2}\right) c(x^*)
.\end{eqnarray*}

Suppose $c(s,t)>(\frac{1}{3}+\sigma_u)c(x^*)$. From Lemmas~\ref{l:h1} and \ref{l:h3},\begin{eqnarray*}
c(\mathsf{ALG})&\leq& c(F)+c(J)\\
&<&c(x^*)+\left\{c(x^*) - (\frac{1}{3}+\sigma_u) c(x^*)\right\}\\
&=& \left(\frac{5}{3}-\sigma_u\right) c(x^*)
.\end{eqnarray*}

Now choose $C_B:=\min(\frac{\sigma_l}{2},\sigma_u)$.
\end{proof}

Lemmas~\ref{l:fca1}, \ref{l:fca2} and \ref{l:fcb} yield the following theorem.

\begin{thm}
\label{t:sc}
For some $\epsilon >0$, Algorithm~\ref{a:sc} is a $(\frac{5}{3}-\epsilon)$-approximation algorithm for the traveling salesman path problem with two prespecified endpoints under the shortest-path metric defined by an unweighted undirected graph.
\end{thm}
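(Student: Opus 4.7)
The plan is to combine Lemmas~\ref{l:fca1}, \ref{l:fca2}, and \ref{l:fcb} via a simple case analysis. I would first confirm that the nested \emph{if} structure of Algorithm~\ref{a:sc} partitions every input into exactly one of Cases~A1, A2, or B: the outer \emph{if} branches on whether $c(s,t) \in [(\frac{1}{3}-\sigma_l)c(x^*), (\frac{1}{3}+\sigma_u)c(x^*)]$, and when inside this interval, the inner \emph{if} dispatches based on the number of nearly integral edges in $x^*$. Applying Corollary~\ref{c:pathstructure} to $x^*$ with its circuit lift $x^*_{\mathrm{circuit}} = x^*+\mathbf{e}_{(s,t)}$ ensures that when Case~A2 is triggered (i.e., when there are fewer than $(1-\epsilon'_2)(n-1)$ nearly integral edges) the corollary's Case~1 must hold, which is precisely what justifies invoking Lemma~\ref{l:case1} inside the proof of Lemma~\ref{l:fca2}.

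With the branching validated, each of the three lemmas gives a bound of the form $(\frac{5}{3}-C)c(x^*)$ on the output cost for its respective case, with $C \in \{C_{A1}, C_{A2}, C_B\}$ all strictly positive. I would then set $\epsilon := \min(C_{A1}, C_{A2}, C_B) > 0$, so that $c(\mathsf{ALG}) \leq (\frac{5}{3}-\epsilon)c(x^*)$ in all three cases, which is the claim of the theorem. Polynomial runtime is inherited from the three subroutines: solving the path-variant Held-Karp LP via the ellipsoid method, enumerating nearly integral edges and computing spanning trees and minimum $T$-joins, and sampling from the maximum entropy distribution as in~\cite{OSS}.

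The main subtlety I anticipate is that Lemma~\ref{l:fca2} only bounds $\E[c(\mathsf{ALG})]$, since Case~A2 samples a spanning tree from a maximum entropy distribution. To promote this to a deterministic polynomial-time $(\frac{5}{3}-\epsilon)$-approximation as stated, I would either invoke the derandomization of the maximum entropy sampler developed in~\cite{AGMOS}, or simply run Case~A2 a polynomial number of independent trials and return the cheapest resulting Hamiltonian path. Applying Markov's inequality to the nonnegative random variable $c(\mathsf{ALG}) - c(x^*)$ (using the LP lower bound $c(x^*) \leq c(\mathsf{ALG})$), a single execution yields a solution of cost at most $(\frac{5}{3}-C_{A2}/2)c(x^*)$ with constant probability, so a high-probability guarantee follows by amplification, at the cost of shrinking $\epsilon$ to $\min(C_{A1}, C_{A2}/2, C_B)$. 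This derandomization is the only nontrivial piece beyond the straightforward combination of the three lemmas.
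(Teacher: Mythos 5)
Your overall plan is the paper's: the three cases are exhaustive, each of Lemmas~\ref{l:fca1}, \ref{l:fca2} and \ref{l:fcb} contributes a constant, and $\epsilon=\min(C_{A1},C_{A2},C_B)$ finishes the cost bound. However, you have omitted the half of the argument that the paper actually spends its proof on: \emph{feasibility} of the output, i.e., that in every case the multigraph $L$ admits an Eulerian path between $s$ and $t$, so that the shortcutting step in the last line of Algorithm~\ref{a:sc} produces a Hamiltonian $s$--$t$ path at all. In Cases~A1 and~B this is the usual Hoogeveen argument ($L$ is a spanning tree plus a $T$-join for the wrong-parity set $T$, so exactly $s$ and $t$ have odd degree and $L$ is connected), but in Case~A2 it is not automatic: there $L_0=F\cup J$ is an Eulerian (all-even-degree, connected) multigraph, and $L$ is obtained by \emph{toggling} the edge $(s,t)$. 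One must observe that this toggle makes $s$ and $t$ the only odd-degree vertices, and --- in the deletion branch --- that $L_0$, being connected and Eulerian, is 2-edge-connected, so $L_0\setminus\{(s,t)\}$ remains connected. Without this step the theorem's claim that the algorithm solves the \emph{path} problem is unsupported, so you should add it.

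Your other deviation is in the opposite direction: you flag that Lemma~\ref{l:fca2} only controls $\E[c(\mathsf{ALG})]$ and propose repairing this by repetition plus Markov's inequality applied to the nonnegative variable $c(\mathsf{ALG})-c(x^*)$. The paper does not do this --- it simply concludes from the expectation bound that the algorithm is a $(\frac{5}{3}-\epsilon)$-approximation, i.e., it is content with a randomized guarantee in expectation. Your amplification argument is sound (the LP is a relaxation, so $c(\mathsf{ALG})\geq c(x^*)$ and Markov applies), and it is a legitimate strengthening at the cost of halving $C_{A2}$; just be aware it is extra relative to what the stated theorem, as the authors interpret it, requires. Your remark that Corollary~\ref{c:pathstructure} is what licenses the use of Lemma~\ref{l:case1} when Case~A2 is triggered is correct and is glue the paper leaves implicit inside Lemma~\ref{l:fca2}.
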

\begin{proof}
In Cases~A1 and B, the multigraph $L$ is the union of a spanning tree and a $T$-join where $T$ is the set of the vertices with the wrong parity of degree. Thus, $L$ has an Eulerian path between the two endpoints.

In Case~A2, $L_0$ is Eulerian and hence 2-edge-connected; $L\supset L_0 \setminus\{(s,t)\}$ is therefore connected and $L$ has an Eulerian path between the two endpoints.

By choosing $\epsilon=\min \{C_{A1}, C_{A2}, C_B\}$, $\epsilon =3.9\cdot 10^{-52}$ for example, we have $\E[c(\mathsf{ALG})]\leq (\frac{5}{3}-\epsilon )c(x^*)$ from Lemmas~\ref{l:fca1}, \ref{l:fca2} and \ref{l:fcb}. Thus, Algorithm~\ref{a:sc} is a $(\frac{5}{3}-\epsilon)$-approximation algorithm.
\end{proof}

Note that this proof also establishes the matching integrality gap upper bound.

\paragraph{Acknowledgments.} The authors thank Bobby Kleinberg for very helpful discussions.

\bibliography{path}

\end{document}